\newtheorem{theorem}{Theorem}
\newtheorem{lemma}{Lemma}
\newcommand{\eps}{\varepsilon}
\def\A{\mathcal A}
\def\B{\mathcal B}
\def\C{\mathcal C}
\def\H{\mathcal H}
\def\I{\mathcal I}
\newcommand{\etal}{{et~al.}}
\newcommand{\ie}{{i.e.}}
\newcommand{\eg}{{e.g.}}
\newcommand{\later}[1]{}
\newcommand{\old}[1]{}
\title{\textsc{A Selectable Sloppy Heap}}
\author{
Adrian Dumitrescu\thanks{%
Department of Computer Science,
University of Wisconsin--Milwaukee, USA\@.
Email:~\texttt{dumitres@uwm.edu}}}
\begin{document}

\maketitle

\begin{abstract}
We study the selection problem, namely that of computing the $i$th
order statistic of $n$ given elements. Here we offer a data structure called
\emph{selectable sloppy heap} handling a dynamic version in which upon request:
(i)~a new element is inserted or
(ii)~an element of a prescribed quantile group is deleted from the data structure. 
Each operation is executed in (ideal!) constant time---and is thus
independent of $n$ (the number of elements stored in the data structure)---provided
that the number of quantile groups is fixed. 
This is the first result of this kind accommodating both insertion and deletion
in constant time.
As such, our data structure outperforms the soft heap data structure of Chazelle
(which only offers constant amortized complexity for a fixed error rate $0<\eps \leq 1/2$)
in applications such as dynamic percentile maintenance.
The design demonstrates how slowing down a certain computation can speed up
the data structure.

\medskip
\textbf{\small Keywords}: online algorithm, approximate selection, $i$th order statistic,
mediocre element, tournaments, quantiles, dynamic set, amortization.

\end{abstract}

\section{Introduction} \label{sec:intro}

The following problem has been devised by Fredman about 25 years ago,
for inclusion in homework assignments for an algorithms course.
This paper both generalizes and strengthens that result.
\begin{quote}
A ``very sloppy heap'' (abbreviated {\em vsh}) is a data
structure for performing the following operations on a set $S$:
(i) {\em insert} and (ii) {\em delete-small}. The latter operation
deletes (and returns) an element $x$ which is among the $\lceil n/2 \rceil$
smallest elements in the set, where $n$ is the current size of the set. 
Explain how to implement a {\em vsh} in constant amortized time per
operation. 
\end{quote}

Together with sorting, selection is one of the most widely used
procedure in computer algorithms. Given a sequence $A$ of $n$ numbers
and an integer (selection) parameter $1\leq i\leq n$, 
the selection problem asks to find the $i$th smallest element in $A$. 
Sorting trivially solves the selection problem; however, 
a higher level of sophistication is required by a linear time algorithm.  
A now classic approach for selection~\cite{BFP+73,FR75,Hy76,SPP76,Yap76} 
from the 1970s is to use an element in $A$ as a pivot to partition $A$
into two smaller subsequences and recurse on one of them 
with a (possibly different) selection parameter~$i$.  

The time complexity of this kind of algorithms is sensitive to the pivots used.
If a good pivot is used, many elements in $A$ can be discarded, while if a bad pivot is used, 
the size of the problem may be only reduced by a constant in the worst case,
leading to a quadratic worst-case running time. But carefully choosing a good pivot can be
time consuming. Choosing the pivots randomly (and thus without much effort)
yields a well-known randomized selection algorithm with expected linear running time;
see \eg,~\cite[Ch.~9.2]{CLRS09},~\cite[Ch.~13.5]{KT06}, or~\cite[Ch.~3.4]{MU05};
however its worst case running time is quadratic in $n$. 

The first deterministic linear time selection algorithm {\sc Select} 
is due to Blum~\etal~\cite{BFP+73}; it is recursive in nature. 
By using the median of medians of small disjoint groups of the input array
(of constant size at least $5$)
good pivots that reduce the size of the problem by a
constant fraction and thereby lead to $O(n)$ time overall,
can be chosen at low cost in each recursive invocation.
More recently, suitable variants of {\sc Select} with groups of $3$ and $4$
also running in $O(n)$ time have been also put forward~\cite{CD14,Z14}.  
The selection problem, and computing the median in particular are in close relation
with the problem of finding the quantiles of a set, that we describe next. 

\paragraph{Quantiles.} The $k$th \emph{quantiles} of an $n$-element
set are the $k-1$ order statistics that divide the sorted set in $k$
equal-sized groups (to within $1$); see, \eg, \cite[p.~223]{CLRS09}.
It is known that the $k$th quantiles of a set can be computed
by a recursive algorithm running in $O(n \log{k})$ time.
Such an algorithm can be modified, if needed, so that the $k$ groups
can be also output, say, each as a linked list, within the same
overall time. For $2 \leq i \leq k-1$, the $i$th group of elements
(bounded by the $(i-1)$th and the $i$th quantile) is referred to as the $i$th
\emph{quantile group}; the first quantile group consists of the elements less or equal
to the first quantile, and the $k$th quantile group consists of the elements greater or equal
to the $(k-1)$th quantile.

\smallskip
Our main result is the following; for the optimality of the dependence in $k$,
see the first remark in Section~\ref{sec:applications}. 

\begin{theorem} \label{thm:ssh}
  For any fixed integer $k$, a data structure for dynamic sets exists
  accommodating each of the following operations in constant time
   (that depends on $k$):
(i) {\sc Insert} a new element and
(ii) {\sc Delete} (and return)  an element of the $i$th quantile group
  of the current set, where $1 \leq i \leq k$. 
  The time per operation is $O(\log{k})$, which is optimal in a comparison-based model 
  even in an amortized setting.
\end{theorem}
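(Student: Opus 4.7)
The plan is to maintain $k$ doubly linked lists $G_1, \ldots, G_k$ representing the (approximate) quantile groups of the current dynamic set $S$, together with $k-1$ pivot values $p_1 < \cdots < p_{k-1}$ stored in a sorted array that serve as the approximate boundaries. The invariant I would maintain is that every element currently in $G_j$ lies between the true $(j-1)$th and $j$th quantiles of $S$, so that any element returned from $G_j$ is a valid answer.

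For \textsc{Insert}, I would locate the correct group for the new element $x$ by binary-searching the sorted pivot array in $O(\log k)$ comparisons and then splice $x$ at the head of the corresponding list in $O(1)$ time. For \textsc{Delete} from the $i$th quantile group, I would simply remove and return the head of $G_i$ in $O(1)$ time. Provided the invariant holds, both operations meet the claimed time bound.

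The hard part will be maintaining this invariant efficiently as $S$ evolves, since the true quantile boundaries shift with every insertion and deletion. My approach is to process operations in epochs of length $\Theta(n)$, where $n$ is the set size at the start of the epoch; at the beginning of each epoch, perform a global rebuild that recomputes the exact $k$-quantiles of $S$ in $O(n \log k)$ time via the recursive quantile algorithm mentioned in the introduction, and redistributes the lists. Amortized over the $\Theta(n)$ operations of the epoch, this contributes only $O(\log k)$ per operation. Within an epoch, however, the true quantile boundaries may drift by a constant fraction of $n/k$ positions, so a naive partition may violate the invariant. To compensate, I would refine the partition into $\Theta(k)$ sub-buckets of size $\Theta(n/k)$ whose boundaries are fixed at the start of the epoch, and identify, for each quantile group index $i$, a \emph{safe core} of sub-buckets known to lie strictly inside the $i$th current quantile group for the entire epoch; \textsc{Delete} always returns an element from this safe core. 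Choosing the epoch length and the sub-bucket granularity both as $\Theta(n/k)$ guarantees correctness, and the bookkeeping of which sub-bucket is safe is cheap since it involves only $O(k)$ pointers updated at most a constant number of times per operation. Standard deamortization can be applied if a worst-case $O(\log k)$ bound is desired.

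For the matching lower bound $\Omega(\log k)$ in the comparison-based model, I would appeal to an information-theoretic argument: in an adversarial sequence of $n$ insertions followed by $k$ \textsc{Delete} queries, one per quantile group, the data structure must effectively separate the inserted elements into $k$ equal-sized ordered groups, a task known to require $\Omega(n \log k)$ comparisons and hence $\Omega(\log k)$ amortized per operation.
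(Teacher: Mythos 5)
Your high-level plan (maintain $\Theta(k)$ buckets keyed by pivots, service \textsc{Insert} by binary search over pivots and \textsc{Delete} by popping from a list, rebuild periodically) closely mirrors the paper's preliminary variants in Section~\ref{sec:prelim}, but the crucial quantitative step fails. You propose epochs of length $\Theta(n)$ and claim that within an epoch the quantile boundaries drift only ``by a constant fraction of $n/k$ positions.'' This is not so: each operation shifts any element's rank by at most one, so over $\Theta(n)$ operations the boundary of the $i$th quantile group (rank $\approx in'/k$ in the \emph{current} set of size $n'$) can move by $\Theta(n)$ positions, which for any $i=\Omega(1)$ is a factor $\Theta(k)$ larger than the width $n/k$ of a quantile group. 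No fixed ``safe core'' of sub-buckets of size $\Theta(n/k)$ can remain inside a target group across the whole epoch. Conversely, if you shorten the epoch to $\Theta(n/k)$ so the drift is bounded by a small multiple of $n/k$ (this is exactly the paper's first variant, which deletes from the middle third of each group), the amortized rebuild cost becomes $O\bigl(\frac{n\log k}{n/k}\bigr)=O(k\log k)$, not $O(\log k)$. You cannot simultaneously have long epochs (for the amortization) and small drift (for correctness) with a pure global-rebuild scheme.

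The remark that ``standard deamortization can be applied'' also hides the actual difficulty. The paper's entire Section~\ref{sec:main} is devoted to removing amortization, and it is not a routine piece-by-piece spreading of a rebuild. The paper abandons global rebuilds altogether; instead it keeps $\Theta(k)$ buckets in a balanced BST, schedules median-finding for a split as a background computation over the non-consecutive time steps in which a round-robin pointer visits that bucket, merges pairs of adjacent small buckets chosen via a binary min-heap over adjacent-size-sums, and proves (Lemma~\ref{lem:invariants}) that these local rules keep the number of buckets $O(k)$ and the bucket sizes $O(n/k)$ at all times. That machinery---round-robin scheduling, heap-controlled merges, and incremental split preparation---is precisely what replaces the global reorganization you invoke, and it is the technical heart of the result. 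Finally, your lower-bound sketch (``$n$ inserts then $k$ deletes must separate the elements into $k$ ordered groups'') does not go through as stated: answering $k$ queries yields only $O(k\log n)$ bits, far less than the $\Omega(n\log k)$ needed; the paper's argument instead performs $\Theta(n)$ deletions and combines them with a cheap binary-insertion sort so that the total work must be a full sort.
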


\paragraph{Background and related problems.}
Since the selection problem is of primary importance, 
the interest in selection algorithms has remained high ever since;
see for instance~\cite{AKSS89,BCC+00,BJ85,CM89,DHU+01,DZ96,DZ99,FR75,FG79,GKP96, 
  HS69,Ho61,J88,Ki81,Pa96,YY82,Yap76}. In particular, determining the
comparison complexity for computing various order statistics including
the median has lead to many exciting questions, some of which are still
unanswered today; in this respect, Yao's hypothesis on selection~\cite[Ch.~4]{Yao74}
remains an inspiring endeavor~\cite{DZ96,Pa96,SPP76}.
A comprehensive review of early developments in selection
is provided by Knuth~\cite{Kn98}.
Computational results on the exact comparison complexity of 
finding the $i$th smallest out of $n$ for small $i,n$,
have been obtained in~\cite{GKP96,Ho13}. 
We also refer the reader to the dedicated book chapters on selection
in~\cite{AHU83,Ba88,CLRS09,DPV08,KT06} and the more recent
articles~\cite{CD14,JMRS15,Ki13}.

The selection problem is also of interest in an online or dynamic setting,
where elements are inserted or deleted. A balanced binary search tree on $n$ distinct elements
can be augmented with a \emph{size} attribute for each node, thereby allowing the retrieval
of an element of a given rank in $O(\log{n})$ time~\cite[Ch.~14]{CLRS09}.
We say that the $i$th smallest element has \emph{rank} $i$,
where $i=1,\ldots,n$, with ties broken arbitrarily.
Further, determining the rank of an element in the data structure can also
be done in $O(\log{n})$ time. 
Consequently, a dynamic order statistic operation (inserting a new element or
deleting an element of a given rank) can be accomplished within the same time bound. 

Priority queues (\emph{heaps} in particular) are ubiquitous data structures;
a typical operation set might include {\tt Insert}, {\tt Delete-Min}, {\tt Decrease-Key}
(and perhaps also {\tt Find-Min}, {\tt Increase-Key}, or other operations). See for
instance~\cite{Chan13} and the references therein for some new variants and recent
developments. 

A variant of a priority queue that allows dynamic maintenance of percentiles 
is the \emph{soft heap} data structure due to Chazelle~\cite{Ch00a};
in addition to {\tt Create} and {\tt Meld}, the data structure accommodates
{\tt Insert}, {\tt Delete} and {\tt Findmin} ({\tt Delete $x$} removes item $x$).  
Consider a sequence of $m \geq n$ operations after its creation, that includes
$n$ {\tt Insert} operations. For any $0<\eps \leq 1/2$, 
a soft heap with error rate $\eps$ supports each operation in constant amortized time,
except for {\tt Insert} which takes $O(\log{1/\eps})$ amortized time. 
The data structure ensures that at most $\eps n$ of its items are \emph{corrupted}
(their keys have been artificially raised).
{\tt Findmin} returns the minimum current key (which might be corrupted). 
In contrast, our data structure uses $O(\log{1/\eps})$ time per operation,
where $\eps \sim 1/k$, and thereby outperforms the soft heap with respect to
the worst-case time per update (insert or delete) operation.

\paragraph{Definition.}
Let $k$ be a fixed positive integer. 
A ``selectable sloppy heap'' (abbreviated {\em ssh}) is a data
structure for performing the following operations on a set $S$ with $n$ elements:
\begin{itemize} \itemsep 0pt
\item [(a)] {\sc Insert} $x$: a new element $x$ is inserted.
\item [(b)] {\sc Delete $i$}: this deletes (and returns)
  some element $x$ which belongs to the $i$th quantile group of the current set,
where $1 \leq i \leq k$; if $n<k$, the deleted element is not subject to any requirement. 
\end{itemize}

\paragraph{Outline of the paper.}
To explain the main challenges and introduce the main ideas,
we first sketch several preliminary implementations of the data structure,
meeting suboptimal (in $k$) benchmarks:
(i)~$O(k \log{k})$ amortized time per operation for the first variant in Section~\ref{sec:prelim};
(ii)~$O(k \log{k})$ worst-case time per operation for the second variant in Section~\ref{sec:prelim};
(iii)~$O(\log{k})$ amortized time per operation for the third variant in Section~\ref{sec:prelim}.
We then further refine these methods in Section~\ref{sec:main} to obtain an optimal implementation
of the data structure running in $O(\log{k})$ worst-case time per operation. 
In particular, for constant $k$, the second variant in Section~\ref{sec:prelim} and the
(main) variant in Section~\ref{sec:main} run in $O(1)$ worst-case time per operation. 
We conclude in Section~\ref{sec:applications} with applications and technical remarks. 

\paragraph{Comments and notations.} 
Duplicate elements are easily handled by the design.
Each operation request is associated with a \emph{discrete time} step,
\ie, the $j$th operation occurs during the $j$th time step, where $j=1,2,\ldots$ 
Without affecting the results, the floor and ceiling functions are omitted
in the descriptions of the algorithms and their analyses.

Let $A$ be a set; we write $x \leq A$ if $x \leq a$ for every $a \in A$.
The size of a bucket ${\tt b}$, \ie, the number of elements in ${\tt b}$, 
is denoted by $s({\tt b})$. 
If buckets ${\tt b'}$ and ${\tt b''}$ merge into a new bucket ${\tt b}$,
written as ${\tt b'} \cup {\tt b''} \to {\tt b}$, we clearly have
$s({\tt b'}) + s({\tt b''}) = s({\tt b})$. 
Similarly, if a bucket ${\tt b}$ splits into two buckets ${\tt b'},{\tt b''}$,
written as ${\tt b} \to {\tt b'} \cup {\tt b''}$, then we have 
$s({\tt b}) = s({\tt b'}) + s({\tt b''})$ as well.

\section{Some ideas and preliminary solutions} \label{sec:prelim}

\paragraph{A first variant with $O(k \log{k})$ amortized time per operation.} 
Let $n$ denote the current number of elements in the data structure,
with the elements being stored in a linear list (or an array).
By default (if $n$ is large enough) the algorithm proceeds in \emph{phases}.

If $n < 3k$, and no phase is under way, proceed by brute force: 
for \emph{\sc Insert}, add the new element to the list;
for {\sc Delete $i$}, compute the elements in the
$i$th quantile group and delete (return) one of them arbitrarily. 
Since $k$ is constant, each operation takes $O(3k)$,
\ie, runs in constant time. 

If $n \geq 3k$, and no phase is under way, start a new phase:
reorganize the data structure by computing the $3k$th quantiles
of the set and the corresponding groups; store each group in a list
(bucket) of size $n/(3k)$.
The next $n/(3k)$ operations make up the current phase.
Process each of these $n/(3k)$ operations by using exclusively elements stored
in these $3k$ buckets.
For {\sc Insert}, add the new element to an initially empty overflow bucket
(an empty overflow bucket is created at each reorganization). 
For {\sc Delete $i$}, remove any element from the bucket
$(i-1) \frac{n}{k} + \frac{n}{3k}$, \ie, from the middle
third of the $i$th quantile group (out of the total $k$). 

The reorganization takes $O(n \log{k})$ time and is executed about every $n/(3k)$
steps, where each operation counts as one step. The resulting amortized cost
per operation is
$$ O \left( \frac{n \log{k}}{n/k} \right) = O(k \log{k}), $$
namely $O(1)$ for constant $k$. The above idea is next refined so as to obtain this
as a worst-case time bound per operation.

\paragraph{A second variant with $O(k \log{k})$ worst-case time per operation.} 
It suffices to consider the case of large $n$,
namely $n \geq 3k$. The algorithm proceeds in phases: each phase starts
with a reorganization, namely computing the $(3k)$th quantiles and the
corresponding quantile groups; see~\cite[Ch.15]{CLRS09}.
The time taken is $O(n \log{(3k)})=O(n \log{k})$. 
There are $n/(3k)$ {\sc Insert} and {\sc Delete} operations,
\ie, $n/(3k)$ steps following each reorganization
until the results of the next reorganization become available:
assume that the data structure holds $n$ items
when the current reorganization starts
($n$ is redefined at the start of each reorganization)
and there are old buckets to use until the reorganization is finalized.
That is, after that many steps new buckets (however, with an old content)
become available. Use these buckets for the next $n/(3k)$ operations,
and so on.

To transform the constant amortized time guarantee into a constant worst-case
guarantee for each operation, 
spread the execution of reorganization over the next $n/(3k)$ operations,
\ie, over the entire phase. 
The resulting time per operation is bounded from above by
$$ O\left( \frac{n \log{k}}{n/(3k)} \right) = O\left( k \log{k} \right). $$

Since {\sc Delete} is serviced from the existent buckets and
{\sc Insert} is serviced using the overflow bucket,
each operation involves inserting or deleting one element from a list;
thus for constant $k$, each operation takes $O(1)$ overall time to process. 

New buckets become available about every other $n/(3k)$ operations;
this is a rather imprecise estimate because the current number of elements, $n$, changes.
Since a reorganization is spread out over multiple steps, the result becomes available
with some delay, and moreover, its content is (partially) obsolete. 
To verify that the data structure operates as intended, one needs to
check that the rank of a deleted element belongs to the required interval
(quantile group); we omit the calculation details. The key observation is 
that any one operation can affect the rank of any element by at most $1$:
to be precise, only {\sc Insert} or {\sc Delete} of a smaller element
can increase the rank of an element by $1$ or decrease the rank of an element by $1$,
respectively.

\paragraph{A third variant with $O(\log{k})$ amortized time per operation.} 
We briefly describe an implementation achieving $O(\log{k})$ amortized time
per operation that is tailored on the idea of a B-tree;
this variant is due to Fredman~\cite{Fr16}.
Use a balanced binary search tree with $\Theta(\log{k})$ levels
storing $O(k)$ splitting keys at the leafs. Each leaf comprises a bucket 
of $\Theta(n/k)$ items, with order maintained between buckets,
but \emph{not within} buckets.  
When an insertion causes a bucket to become too large, it gets split into
two buckets by performing a median selection operation.  Small buckets (due
to deletions) get merged. A bucket of size $m$, once split, won't get split sooner
than another $\Omega(m)$ operations.

When the number of elements doubles (or halves), a tree reorganization 
is triggered that partitions the present items into $6k$ uniform sized buckets,
so that the new common bucket size $m$ is $n/(6k)$ items.
These event-triggered reorganizations ensure that buckets do not become
too small unless they are target of deletions; similarly,
buckets do not become too large unless they are target of insertions.
Since the reorganization cost is $O(n \log{k})$, and $\Omega(n)$ operation requests
take place between successive reorganizations, this scheme yields $O(\log{k})$
amortized time per operation, namely $O(1)$ for constant $k$.

\section{A variant with optimal $O(\log{k})$ worst-case time per operation} \label{sec:main}

A brief examination of the approach in the $3$rd variant
reveals two bottlenecks in achieving $O(\log{k})$ worst-case time per operation:
the median computation that comes with splitting large buckets and
the tree reorganization that occurs when $n$ doubles (or halves). 
We briefly indicate below how ideas from the $2$nd and $3$rd early variants
are refined to obtain $O(\log{k})$ worst-case time per operation;
in particular, $O(1)$ time for constant $k$. 
It is shown in the process how execution in \emph{parallel} of several sequential
procedures can lead to a \emph{speed up} of the data structure.

A balanced BST for $\Theta(k)$ keys is used, subdividing the data into $O(k)$ buckets.
A \emph{size} attribute is associated with each node reflecting the
number of elements stored in the subtree rooted at the respective node. 
Modifications in the data structure at each operation are reflected in
appropriate updates of the size attributes at the $O(\log{k})$ nodes
of the $O(1)$ search paths involved, in $O(\log{k})$ time overhead per operation. 

If $n$ is the current number of elements, each bucket holds at most $n/(3k)$ elements,
and so each of the $k$ quantile groups contains at least one of the buckets entirely.
By choosing one such bucket at the bottom of the search path in the BST for executing
a {\sc Delete} operation from the $i$th quantile group guarantees its correctness. 

The $n$ elements are kept in $\Theta(k)$ buckets of maximum size $O(n/k)$
that form the $\Theta(k)$ leaves of a balanced binary search tree $\A$.
As in the third variant (in Section~\ref{sec:prelim}), each leaf holds a bucket,
with order maintained between buckets, but \emph{not within} buckets.  
Buckets that become too large are split in order to enforce a suitable
upper limit on the bucket size. Median finding procedures along with other preparatory
and follow-up procedures accompanying bucket splits are scheduled as background computation,
as in the second variant (in Section~\ref{sec:prelim}).

Our data structure merges small buckets in order to keep the number of buckets under 
control and renounces the periodic tree reorganizations by introducing new elements of design:
a round robin process and a priority queue jointly control the maximum bucket size
and the number of buckets in the BST.
These mechanisms are introduced to prevent buckets becoming too small or too large
as an effect of changes in the total number of elements, $n$,
and \emph{not} necessarily as an effect of operations directed to them.

\paragraph{Outline and features.}
Let $N:=12k$. For illustrating the main ideas, assume now that $n \geq N$.
The buckets are linked in a doubly-linked linear list $\B$, in key order;
adding two links between the last and the first bucket yields a circular list $\C$,
referred to as the \emph{round robin} list. We note that $\B$ and $\C$ are two views of
the same data structure.

Each operation request translates to locating a suitable bucket for implementing the request.
The circular list is traversed in a round robin fashion, so that the current round robin
bucket in the list is also examined during the current operation request.
The round robin process ensures that (i)~the buckets do no exceed their maximum capacity,
and (ii)~certain ``long-term'' preparatory bucket-splitting procedures are run in the background
over a succession of non-consecutive discrete time steps allocated to the same bucket. 

Each bucket split entails a merge-test for the pair of adjacent buckets
with the minimum sum of sizes.   
The process of merging adjacent buckets in $\B$ is controlled by a priority queue in
the form a \emph{binary min-heap} $\H$. If $|\B|=t$, \ie, there are $t$ buckets,
$\H$ holds the $t-1$ sums of sizes $s({\tt b}) + s({\tt b^+})$, for buckets ${\tt b} \in \B$;
here ${\tt b^+}$ denotes the bucket that follows ${\tt b}$ in $\B$. 
A merge is made provided the minimum value at the top of $\H$ is below some threshold;
and $\A$, $\B$ and $\H$ are updated. Merging adjacent buckets ensures that the total number of 
buckets remains under control, regardless of which buckets are accessed by operation requests. 

\paragraph{Elements of the design.}
We highlight two: (i)~use of the priority queue $\H$ to keep the number of buckets
under control, and
(ii)~running the procedures involved at different rates as needed to ensure that
certain task deadlines and precedence constraints among them are met.
The data structure maintains the following two invariants:

\begin{itemize} \itemsep 0pt
\item [$\I1$] Each bucket contains between $1$ and $n/(3k)$ elements;
there is no limit from below imposed on the bucket size. 
\item [$\I2$] The number of buckets is between $3k$ and $N=12k$,
  as implied by the maximum bucket size, and a later argument based on
  the rules of operation (on merging adjacent buckets);
  see Action~2 and Lemma~\ref{lem:invariants}. 
\end{itemize}

Recall that all $\Theta(k)$ buckets are linked in a circular round robin list:
in key order, and with the last bucket linked to the first one. 
A pointer to the current \emph{round robin bucket}
(initialized arbitrarily, say, to the first bucket) is maintained.
Each \emph{operation request} advances this position in the list by one slot.
Observe that every bucket becomes the round robin bucket about every $\Theta(k)$
discrete time steps. 
Further, each operation request leads via the search path in $\A$ to one of the
buckets, referred to as the \emph{operation bucket}. 
Executing one operation is done by a sequence of actions performed in the operation bucket
and the round robin bucket; these are referred to as the \emph{current} buckets.
All these actions are therefore associated with the same discrete time step.
Every bucket update (this includes creation or deletion of buckets)
entails a corresponding update of the binary heap $\H$ in the (at most)
two heap elements $s({\tt b}) + s({\tt b^+})$ that depend on it.  

A bucket is declared \emph{large} if its current size exceeds
$9/10$ of the maximum allowed, \ie, if the current size exceeds $9n/(30k)$. 
All other buckets are declared \emph{regular}. 
Each operation may cause an update in the \emph{status} of
the round robin bucket and the operation bucket (among large and regular). 

\paragraph{Action 1.} Execute the requested operation: either add a new element to
the respective bucket, or delete an arbitrary element from it.
If the operation bucket becomes empty as a result of the current
{\sc Delete} operation, it is deleted, and the BST is correspondingly updated
in $O(\log{k})$ time. Status updates in the current two buckets are made, if needed.
If a median finding procedure or a follow up procedure is under way in the current
operation bucket or the current round robin bucket (see details below),
the next segment consisting of $500$ (or less) \emph{elementary operations}
of this procedure is executed. 

\paragraph{Action 2 (merge-test).} 
Let $\sigma:=s({\tt b}) + s({\tt b^+})$ be the minimum value at the top of the heap $\H$.
If $\sigma \leq 5n/(30k)$, merge the two buckets into one:
${\tt b} \cup {\tt b^+} \to {\tt b}$, and update the tree $\A$,
the bucket list $\B$ and the binary heap $\H$ to reflect this change:
(i)~delete the two buckets that are merged and insert the new one that results into
$\A$ and $\B$;
(ii)~extract the minimum element from $\H$ and insert the two new sum of sizes of
two adjacent buckets formed by the new bucket into $\H$ (if they exist). 

Handling of $\A,\B$ and $\H$ take $O(\log{k})$ time, $O(1)$ time, and  $O(\log{k})$ time,
respectively.  
In particular, this action ensures that the size of any new bucket obtained by merge
is at most $5n/(30k)$. It is worth noting that a bucket for which a partitioning
procedure is under way, as described below, cannot be part of a pair of buckets to merge
(\ie, passing the merge-test).

\paragraph{Action 3 (finalizing a split).}
Similar to the merge operations in Action~2,
finalizing the split can be completed in $O(\log{k})$ time:
it essentially involves updating $\A$, $\B$ and $\H$.
It will be shown subsequently that the size of any \emph{new} bucket
resulting from a split is in the range $[4n/(30k), 6n/(30k)]$.

\medskip
Besides Actions $1-3$, there are actions associated with procedures running in the background,
whose execution is spread out over a succession of non-consecutive discrete time steps
allocated to the same bucket. The procedures are in preparation of splitting large buckets. 

\paragraph{Splitting a large bucket.}
For simplicity of exposition, we assume that $n =\Omega(k)$, for a sufficiently large
constant factor.
If the current bucket ${\tt b}$ is large, \ie, $s({\tt b}) \geq 9n/(30k)$, 
and no procedure is active in the current bucket, let $n_0:=n$
(the number of elements existent when the procedure is initiated);
and place $ 8.8 n_0/(30k) $ elements into (a main part) $P_1$ and the remaining 
$s({\tt b}) - |P_1|$ elements into (a secondary part) $P_2$.
Observe that $0.2 n_0/(30k) \leq |P_2| \leq 0.4 n_0/(30k)$. 
Any insertions and deletions from the current bucket until the split is finalized
are performed using $P_2$. A \emph{balanced partition} of the current bucket
will be obtained in at most $n_0/10$ time steps. 

Consider the following $n_0/10$ operation requests. By Lemma~\ref{lem:invariants} below,
the number of buckets is at most $12k$ at any time,
and so at least $(n_0/10)/(12k)=n_0/(120k)$ time steps are allocated to ${\tt b}$
in the round robin process by the $n_0/10$ time mark.
Let $t_1$, where $n_0/(120k) \leq t_1 <n_0/10$, mark the first occurrence
when a total of $n_0/(120k)$ discrete time steps have been allocated to ${\tt b}$
(as operation steps or round robin steps). 
As shown next, this number of steps suffices for finalizing the
balanced partition and the split of ${\tt b}$. The process calls two
procedures, labeled (A) and (B): 

\smallskip
(A) Start a \emph{median finding} procedure, \ie, for finding the two quantile groups
$Q_1,Q_2$ of $P_1$: an element $m \in Q_1 \cup Q_2 =P_1$ is found so that
$Q_1 \leq m \leq Q_2$ and $||Q_1| -|Q_2|| \leq 1$. 
At the point when this procedure is launched, the computational steps
are scheduled so that every time the bucket gets accessed
(either as the operation bucket or as the round robin bucket), $500$
computational steps for selecting the median element of $P_1$ take place. 
Assume for concreteness that median finding on an input set $S$ takes at most $10|S|$ 
elementary operations; when applied to $P_1$,
we have $|P_1| \leq 8.8 n_0/(30k)$, and thus $88 n_0/(30k)$ elementary operations suffice.
At the rate of $500$ per discrete time step, it follows that
$88 n_0/(15000k) \leq n_0/(160k)$ time steps suffice for finding the median of $P_1$. 

\smallskip
(B) After the median has been determined, a \emph{follow up} procedure is initiated in the
same bucket. It aims at reducing and finally eliminating the leftover of $P_2$,
so that in another $n_0/(480k)$ discrete steps, a balanced partition of
the current bucket is obtained.
The follow up procedure starts with the two quantile groups of the same size, $4.4n_0/(30k)$,
as created by the median finding procedure, by comparing each element of $P_2$ against $m$
and properly placing it in one of the two groups. This procedure runs at the rate of $10$ items
per discrete time step accessing the current bucket
(either as the operation bucket or as the round robin bucket), 
until finally the partitioning process is completed with all elements in the current
bucket properly placed against the pivot $m$. 
Note that $|P_2| \leq 0.4 n_0/(30k) + n_0/(480) \leq 0.5 n_0/(30k)$ at any time;
at the rate of $10$ items per discrete time step, it follows that
$0.5 n_0/(300k) \leq n_0/(480k)$ time steps suffice for completing the follow up procedure.

\medskip
The two procedures terminate within a total of at most $n_0/(160k) + n_0/(480k)=n_0/(120k)$
discrete time steps, as required. The parameters are chosen so that the split takes place
before the large bucket ${\tt b}$ becomes illegal.

Recall that $0.9 n_0 \leq n \leq 1.1 n_0$; thus, in terms of the current number of elements,
$n$, the split of a large\footnote{This qualification refers to
the time when the partitioning procedure was initiated.} bucket ${\tt b}$ produces
two smaller buckets ${\tt b'}$, ${\tt b''}$, where
\begin{equation} \label{eq:split}
\frac{4n}{30k} \leq \frac{4.4 n_0}{30k} \leq s({\tt b'}), s({\tt b''}) \leq 
\frac{4.4 n_0}{30k} + \frac{0.4 n_0}{30k} + \frac{n_0}{480k}
\leq \frac{5 n_0}{30k} \leq \frac{6n}{30k}.
\end{equation}

Overlapping partitioning phases in the same bucket are precluded by the fact that no
fewer than $3n/(30k)$ data structure operations accessing a given fresh
bucket can trigger the next launching of a partitioning phase for that bucket. 

\paragraph{Remark.}
Let ${\tt b}$ be a new bucket produced in the current operation.
If ${\tt b}$ is generated by a split operation, then 
$ \frac{4n}{30k} \leq s({\tt b}) \leq \frac{6n}{30k}$ by~\eqref{eq:split}.
If  ${\tt b}$ is generated by a merge operation, then
$ s({\tt b}) \leq \frac{5n}{30k}$ by the merge-test. 

\paragraph{Analysis of merging buckets and maintaining the two invariants.}
To prove that the two invariants $\I1$ and $\I2$ are maintained, we need the following
key fact.

\begin{lemma} \label{lem:invariants}
The number of buckets is at most $N= 12k$ at any time.
\end{lemma}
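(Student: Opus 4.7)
The plan is to prove by induction on the discrete time step that the bucket count $t$ satisfies $t \leq 12k$ at the end of every operation. The base case is immediate, since the initialization (once $n \geq N = 12k$) sets up the $3k$th quantile decomposition, giving $t = 3k \leq 12k$. For the inductive step, I observe that among Actions $1$--$3$ comprising one operation, only a split finalization (Action~3) can raise $t$, and it does so by exactly one; insertions preserve $t$, while deletions that empty a bucket and merges (Actions~1 and~2) leave $t$ unchanged or decrease it. Since every split finalization is, by the protocol, immediately followed by a merge-test, it suffices to show that if $t = 12k$ just before a split is finalized, then the ensuing merge-test succeeds and restores $t = 12k$.

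I would verify this by a pigeonhole bound on the adjacent bucket sums in $\B$. Immediately after the split, $\B$ contains $12k+1$ buckets and hence $12k$ consecutive pairs. Writing $s_1, \ldots, s_{12k+1}$ for the bucket sizes in the order of $\B$ and using $\sum_i s_i = n$,
\[
\sum_{i=1}^{12k} (s_i + s_{i+1}) \;=\; 2n - s_1 - s_{12k+1} \;\leq\; 2n,
\]
so the minimum adjacent pair sum is at most $\frac{2n}{12k} = \frac{5n}{30k}$. This matches the merge-test threshold $\sigma \leq 5n/(30k)$ from Action~2, so a merge is triggered and $t$ returns to $12k$ by the end of the step.

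The main obstacle will be the caveat that a bucket currently undergoing a partitioning procedure is ineligible for merging, so the minimum in $\H$ is effectively taken over a restricted family of adjacent pairs rather than over all $12k$ pairs. To handle this, I would carefully account for the buckets simultaneously in partitioning---each launched only when its bucket has size $\geq 9n/(30k)$ and concluding within a bounded number of its own round-robin visits---and argue that enough eligible adjacent pairs remain in $\B$ for the averaging argument (with mildly adjusted constants) to still produce an eligible pair of sum at most $5n/(30k)$ whenever $t$ would otherwise exceed $12k$, thus preserving the induction.
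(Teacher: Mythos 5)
Your approach is essentially the paper's: induction on the discrete time step, together with an averaging bound over the adjacent-pair sums stored in $\H$, using the threshold $5n/(30k)$. The paper simply phrases the key step in the contrapositive---in Case~2 of its proof it assumes no merge was triggered, i.e.\ every adjacent sum satisfies $a_i + a_{i+1} > 5n/(30k)$, sums these $t-1$ inequalities to get $2n > (t-1)\cdot 5n/(30k)$, and concludes $t \leq 12k$; you instead assume $t$ would reach $12k+1$ and deduce that some adjacent pair has sum at most $2n/(12k) = 5n/(30k)$, forcing a merge. These are the same computation.

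The one place you leave a genuine gap is the caveat you raise at the end: you correctly note that a bucket currently undergoing a partitioning procedure is ineligible for merging, and you say you ``would carefully account'' for such buckets and rework the averaging ``with mildly adjusted constants.'' That is not the right fix, and no adjustment of the averaging is needed. The point---which the paper records as an observation in Action~2 (``a bucket for which a partitioning procedure is under way \ldots cannot be part of a pair of buckets to merge, i.e., passing the merge-test'')---is that a bucket in the middle of a partitioning procedure retains its main part $P_1$ of size $8.8\,n_0/(30k)$ throughout, and since $n \leq 1.1\,n_0$ during the procedure, its size is always at least $8n/(30k) > 5n/(30k)$. Hence \emph{any} adjacent pair containing such a bucket has sum strictly above the merge threshold, so the minimum-sum pair found by the averaging argument automatically avoids partitioning buckets and is always eligible. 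The averaging over \emph{all} $t-1$ pairs (which is what $\H$ stores) therefore already gives what you need, with the original constants. You should close the gap by supplying this size lower bound on partitioning buckets rather than by trying to restrict the index set in the sum.
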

\begin{proof}
  Let $t$ denote the number of buckets after the current operation is executed,
  and $j=1,2,\ldots$ denote the discrete time steps.
Let $B_1,\ldots,B_t$ be the buckets after the current step in key order.
Write $a_i= s(B_i)$, for $i=1,\ldots,t$.
We proceed by induction on $j$ and show that,
if the number of buckets in $\A$ (and $\B$) is at most $N$ after each of the
preceding $N$ time steps, it remains at most $N$ after the current time step.
Observe that the number of buckets can only increase by one after a bucket split,
and there can be at most two splits associated with a discrete time step.
The induction basis is $j \leq N$, and then indeed, we have $t \leq j \leq N$, as required.

For the induction step, assume that the number of buckets is at most $N$
after the previous operation. If no bucket splits occur during the execution of the
current operation, the number of buckets remains unchanged, and is thus 
still at most $N$ after the current operation. If bucket splits occur,
it suffices to show that the number of buckets is at most $N$ after each split. 
Consider a split operation, and let $\sigma:=s({\tt b}) + s({\tt b^+})$ be the minimum value
at the top of the heap $\H$ after the split.
There are two cases:

\emph{Case 1.} $\sigma \leq 5n/(30k)$, and thus the merge is executed.
Consequently, the number of buckets after the split is still at most $N$,
as required.

\emph{Case 2.} $\sigma> 5n/(30k)$, and thus no merge is executed.
Since $\H$ is a min-heap, we have
\begin{equation} \label{eq:adjacent}
a_i + a_{i+1} > \frac{5n}{30k}, \text{ for } i=1,\ldots,t-1.
\end{equation}
Adding these $t-1$ inequalities yields
$$ 2n= 2 \sum_{i=1}^t a_i > (t-1) \frac{5n}{30k}, $$
or $t \leq 12k=N$, as claimed, and concluding the induction step.  
\end{proof}

\paragraph{Summary of the analysis.}
As seen from the preceding paragraphs and the fact that the number of buckets is $t=O(k)$,
the total time per operation is $O(\log{t}) + O(\log{k}) + O(1) = O(\log{k})$, as required.

\section{Applications and technical remarks} \label{sec:applications}

~~~~1. An argument due to Fredman~\cite{Fr16}
shows that executing a sequence of $n$ operations
(from among {\sc Insert} and {\sc Delete $i$}, where $1 \leq i \leq k$)
requires $\Omega(n \log{k})$ time in the worst case, regardless of the
implementation of the data structure. The argument relies on the information theory
lower bound~\cite[Ch.~5.3.1]{Kn98}.
(For completeness, it is included in the Appendix.)

\smallskip
2. In addition to the two operations provided {\sc Insert} and {\sc Delete $i$},
the following operation can be also accommodated at no increase in cost:
{\sc Read $i$}: this returns some element $x$ which belongs to the $i$th
quantile group of the current set, where $1 \leq i \leq k$; the element remains
part of the data structure.

\smallskip
3. The new data structure finds applications in settings with
large dynamic sets where insertions and deletions need to be handled fast,
and there is no need to be very precise with the ranks of the elements handled.
One such application is \emph{approximate sorting}. An array of $n$ elements is said
to be $L$-\emph{sorted} if for every $i \in [n]$, we have $|{\tt rank}(a_i) - i| \leq L$;
see also~\cite{EC+W91,Ma85}. As shown by the lower bound argument, the selectable sloppy heap
can be used to $L$-sort $n$ elements in $O(n \log{(n/L)})$ time.

\smallskip
4. As mentioned in the introduction, the selection problem, and computing the median
in particular, are in close relation with the problem of finding the quantiles of a set.
A typical use of the median relies on its property of being both larger or equal
and smaller or equal than a constant fraction of the $n$ elements.
Following Yao~\cite{Yao74}, an element (think of a player, in the context of tournaments)  
is said to be $(i,j)$-\emph{mediocre} if it is neither among the top $i$ nor
among the bottom $j$ of a totally ordered set $S$. As such, an element from the
middle third quantile group of an $n$-element set is
$(\lfloor n/3 \rfloor, \lfloor n/3 \rfloor)$-mediocre; or simply \emph{mediocre}, for short.
Repeatedly returning a mediocre element in a dynamic setting (with an initially
empty set) can thus be accomplished via the new data structure,
by setting $k=3$ and then repeatedly executing {\sc Delete $2$} or {\sc Read $2$}, as needed,
at a minimal (constant) query cost.
Similarly, putting $k=100$ sets up the data structure for dynamic percentile maintenance
in $O(1)$ time per operation; to the best of our knowledge, this outperforms previously
available data structures with respect to this application; \eg, the soft heap data structure
of Chazelle~\cite{Ch00a} only offers constant amortized time per operation.

\smallskip
5. Here we continue our earlier discussion in Section~\ref{sec:intro} on using BSTs
in the context of dynamic selection taking into account the new data structure.
Traditionally, search trees do not allow duplicate keys; it is however an easy matter
to augment each node of such a tree with a \emph{multiplicity} attribute associated
with each distinct key. In particular, the multiplicity attribute is taken into account
when computing the size attribute of a node. 
Now having a balanced search tree augmented as described, insertion, deletion and search
all take $O(\log{n})$ time. 

Our construction method can be used for any parameter $k$, with $2 \leq k \leq n$.
The data structure obtained in this way can be viewed as an \emph{approximate search tree}.
In particular, one can search for a given key and the approximate rank of a search key
can be determined. To be precise, the following hold:
(1)~For any $2 \leq k \leq n$, an approximate search tree on $n$ items (duplicate keys allowed)
  can be constructed in $O(n \log{k})$ time.
(2)~Search for a given key takes $O(n/k + \log{k})$ time per operation; its
  approximate rank, \ie, the quantile group to which it belongs (out of $k$),
  can be reported in $O(\log{k})$ time (regardless of the presence of the key!). 
(3)~Insertion can be accommodated in $O(\log{k})$ time per operation.
(4)~Deletion of an unspecified key from a given quantile group (out of $k$) takes
  $O(\log{k})$ time; while deletion of a given key takes $O(n/k + \log{k})$ time.

It is worth noting that the approximate search tree we just described appears competitive
when $k$ is small and the search function is infrequently used.
Then insertion and deletion of an unspecified key from a given quantile group
(out of~$k$) takes $O(\log{k}) $ time (the deleted element is revealed after the operation);
\eg, if $k=O(\log{n})$, these operations take $O(\log{\log{n}}) $ time.
On the other hand, the approximate search tree we just described
is no real competitor for the \emph{exact} solution previously discussed; 
indeed, no choice of $k$ in our data structure would allow
an improvement in the performance of all the basic three operations. 

\smallskip
6. An alternative solution (to that outlined in Section~\ref{sec:main})
achieving $O(\log{k})$ worst-case time per operation was conceived by Fredman~\cite{Fr16}
(after the author of the current paper has communicated him the main result).
(i)~His solution avoids the need to merge small buckets (in order to keep the number of
buckets under control) by maintaining two running copies of the data structure and
performing periodic tree reorganizations that create uniform-sized buckets.
Buckets that become large are split using a mechanism similar to that devised
here---in Section~\ref{sec:main}. 
While a complete description of this solution is not publicly available,
the fact is that the decision of avoid merging small buckets comes at a high price.
The main reasons are the need to maintain multiple running copies of the data structure
(with two copies under permanent construction, etc), handling subtle consistency issues
generally hard to satisfy, and the likely reduced speed caused by the above items. 
(ii)~Another cumbersome approach claimed to be a simplification can be found
in his subsequent arXiv posting.
As such, neither alternative solution ((i) or (ii) above) matches in elegance and simplicity
the one given here. 

\smallskip
7. As mentioned earlier, the  $O(\log{1/\eps})$ amortized complexity of the soft heap
is optimal, however this does not hold for its worst-case complexity, in regard to its
{\tt Insert}, {\tt Delete} and {\tt Findmin} operations. In contrast, the
$O(\log{1/\eps})$ worst-case complexity per operation of the selectable sloppy heap is optimal
(where $\eps \sim 1/k$). 

\smallskip
8. Interestingly enough, most applications envisioned by Chazelle~\cite{Ch00a} for the soft heap
(\ie, four items from his list of five) can be dealt with using the selectable sloppy heap;
these include dynamic maintenance of percentiles, linear-time selection,
and two versions of approximate sorting.
One can observe that the complexity analysis for the soft heap, of the error rate in particular,
is more complicated than that for the selectable sloppy heap; moreover, special effort is
required to ensure that the space needed by the soft heap is linear in the number of items present.
In at least one application, approximate sorting, the analysis of doing that with a soft heap
is more involved than that for using our structure for this task. Needless to say,
the soft heap only allows deletion of small items. As such, the new data structure presented
here compares favorably to the soft heap with respect to insertion and deletion.
Moreover, the constant amortized guarantee per operation is replaced by the stronger
constant worst-case guarantee per operation in our case. 

It is worth recalling that the soft heap was designed with a specific application in mind,
minimum spanning trees. Given a connected graph $G=(V,E)$, where $|V|=n$, $|E|=m$,
with weighted edges, Chazelle~\cite{Ch00b} showed that a minimum spanning tree (MST) can be computed
by a deterministic algorithm in $O(m \alpha(m,n))$ time,
where $\alpha$ is the (slowly growing) inverse of the Ackermann function.
(A randomized linear-time algorithm was given by Karger~\etal~\cite{KKT95}.)
The question of whether the selectable sloppy heap can be used for MST computation is left open.

\section*{Appendix} \label{sec:appendix}

In this section it is shown that executing a sequence of $n$ operations
(from among {\sc Insert} and {\sc Delete $i$}, where $1 \leq i \leq k$)
may require $\Omega(n \log{k})$ time in the worst case, regardless of the
implementation of the data structure.

\paragraph{Lower bound.}
Suppose the delete operation always removes one of the $L$ smallest items.  
Then to sort $n$ items:
(a) insert them, (b) do repeated deletion, and
(c) sort the sequence of deleted items in the order created by the 
    successive deletions by executing only $n \lceil \log{L} \rceil$ comparisons.

To accomplish (a) and (b) using the data structure, call {\sc Insert} $n$ times
followed by {\sc Delete $1$} $n$ times (as we will set $L= \lceil n/k \rceil$). 
To accomplish (c), start from the right end (the last of the deleted items),
and sort from right-to-left by successive insertions into a sorted list
being formed, inserting each successive item from the left end.  A given 
item to be inserted would end up in a position no further than $L$ from the 
front of the sorted list under formation, since when it was deleted, it
was one of the $L$ smallest items.  Therefore its correct position 
can be obtained by a $\lceil \log{L} \rceil$-comparison binary search
(as in binary insertion sort~\cite[Ch.~5.2.1]{Kn98}).

Since by the information theory lower bound~\cite[Ch.~5.3.1]{Kn98}, 
(a)-(c) accomplish a task requiring $\lceil \log{n!} \rceil = n \log{n} -O(n)$ comparisons
in the worst case, and (c)~requires only $n \lceil \log{L} \rceil = n \log{L} +O(n)$ comparisons,
it must be that (a) and (b) jointly require $n \log(n/L) -O(n)$ comparisons in the worst case.
Setting $L= \lceil n/k \rceil$ yields that (a) and (b) jointly require
$n \log{k} -O(n) = \Omega(n \log{k})$ comparisons.

Finally, substitute $n$ for $2n$. It follows that there exists a sequence
of $n$ operations that require $\Omega(n \log{k})$ time, thus $\Omega(\log{k})$ amortized time
per operation.

\end{document}